\newcommand{\comment}[1]{}
\newcommand{\runtitle}[1]{{\small \textbf{\boldmath #1}}}
\newcommand{\ind}{{\sc Index}}
\newcommand{\disj}{{\sc Disj}}
\newcommand{\hlis}{{\sc Hidden-IS}}
\newcommand{\bx}{\mathbf{x}}
\newcommand{\ed}{{\rm ed}}
\newcommand{\aed}{\widehat \ed}
\newcommand{\lis}{{\rm lis}}
\newcommand{\sw}{\sigma_w}
\newcommand{\swi}{\sigma_{[i-w+1,i]}}
\newcommand{\eps}{\epsilon}
\begin{document}

\title{
  \mbox{Edit Distance to Monotonicity in Sliding Windows}
}

\author{Ho-Leung Chan\inst{1}
\and Tak-Wah Lam\inst{1}\thanks{T.W. Lam was supported by the GRF Grant HKU-713909E.}
\and Lap-Kei Lee\inst{2}
\and Jiangwei Pan\inst{1}
\and \\
Hing-Fung Ting\inst{1}
\and Qin Zhang\inst{2}
}

\institute{Department of Computer Science, University of Hong Kong, Hong Kong \\
\email{\{hlchan, twlam, jwpan, hfting\}@cs.hku.hk}
\and
MADALGO\footnote{\scriptsize Center for Massive Data Algorithmics -- a Center of the Danish National Research Foundation},
Department of Computer Science, Aarhus University, Denmark \\
\email{\{lklee, qinzhang\}@madalgo.au.dk}
}

\vspace{-.3in}
\maketitle

\vspace{-.2in}
\begin{abstract}
Given a stream of items each associated with a numerical value,
its edit distance to monotonicity is the minimum number of items
to remove so that the remaining items are non-decreasing with respect
to the numerical value. The space complexity of estimating the
edit distance to monotonicity of a data stream is becoming well-understood
over the past few years.
Motivated by applications on network quality monitoring,
we extend the study to estimating the edit distance to monotonicity of
a sliding window covering the $w$ most recent items in the stream for any $w \ge 1$.
We give a deterministic algorithm which can return an estimate within a factor of
$(4+\eps)$ using $O(\frac{1}{\eps^2} \log^2(\eps w))$ space.

We also extend the study in two directions.
First, we consider a stream where
each item is associated with a value from a partial ordered set.
We give a randomized $(4+\epsilon)$-approximate algorithm
using $O(\frac{1}{\epsilon^2} \log \epsilon^2 w \log w)$ space.
Second, we consider an out-of-order stream where each item
is associated with a creation time and a numerical value,
and items may be out of order with respect to their creation times.
The goal is to estimate the edit distance to monotonicity
with respect to the numerical value of items
arranged in the order of creation times.
We show that any randomized constant-approximate algorithm
requires linear space.
\end{abstract}

\section{Introduction}

Estimating the sortedness of a numerical sequence
has found applications in, e.g., sorting algorithms,
database management and webpage ranking (such as Pagerank~\cite{BrP98}).
For example, sorting algorithms can take advantage of
knowing the sortedness of a sequence so as to sort efficiently~\cite{EsW92}.
In relational database, many operations are best performed
when the relations are sorted or nearly sorted over the relevant attributes~\cite{BKF+11}.
Maintaining an estimate on the sortedness of the relations can help determining
whether a given relation is sufficiently nearly-sorted or
a sorting operation on the relation (which is expensive) is needed.
One common measurement of sortedness of a sequence
is its \emph{edit distance to monotonicity} (or ED, in short)~\cite{AJK+02,CMS01,ErJ08,GaG07,GJK+07}:
given a sequence $\sigma$ of $n$ items,
each associated with a value in $[m] = \{1, 2, \dots, m\}$,
the ED of $\sigma$, denoted by $\ed(\sigma)$,
is the minimum number of edit operations required
to transform $\sigma$ to the sequence obtained
by sorting $\sigma$ in non-decreasing order.
Here, an edit operation involves removing an item and re-insert it into a
new position of the sequence.  Equivalently,
$\ed(\sigma)$ is the minimum number of items in $\sigma$ to delete
so that the remaining items have non-decreasing values.
A closely related measurement
is the \emph{length of the longest increasing subsequence} (or LIS) of $\sigma$,
denoted by $\lis(\sigma)$.
It is not hard to see that $\lis(\sigma) = n - \ed(\sigma)$.

With the rapid advance of data collection technologies,
the sequences usually appear in the form of a data stream,
where the stream of items is massive in size (containing possibly billions of items)
and the items are rapidly arriving sequentially.
This gives rise to the problem of estimating ED in the data stream model:
An algorithm is only allowed to scan the sequence sequentially in one pass,
and it also needs to be able to return, at any time,
an estimate on ED of the items arrived so far.
The main concern is the space usage and
update time per item arrival, which, ideally, should
both be significantly smaller than the total data size
(preferably polylogarithmic).

Estimating ED of a data stream is becoming well-understood
over the past few years~\cite{ErJ08,GaG07,GJK+07}.
Gopalan et al.~\cite{GJK+07} showed that computing the ED of a stream
exactly requires $\Omega(n)$ space even for randomized algorithms,
where $n$ is the number of items arrived so far.
They also gave a randomized $(4+\eps)$-approximate algorithm
for estimating ED using space $O(\frac{1}{\eps^2} \log^2 n)$,
where $0 < \eps < 1$.
Later, Ergun and Jowhari~\cite{ErJ08} improved the result
by giving a deterministic $(2+\eps)$-approximate algorithm
using space $O(\frac{1}{\eps^2} \log^2(\eps n))$.
For the closely related LIS problem,
Gopalan et al.~\cite{GJK+07} also gave a deterministic
$(1+\eps)$-approximate algorithm for estimating LIS
using $O(\sqrt{\frac{n}{\eps}})$ space.
This space bound is proven to be optimal in~\cite{GaG07}.
 %\mnote{no lower bound for ED has been shown.\cite{ErJ08,GaG07} proved $\Omega(\sqrt{N})$ lower bound for LIS} showed that any deterministic

%\vspace{.1in}
\runtitle{ED in sliding windows.}
The above results consider the sortedness of all items in the stream
arrived so far, which corresponds to the \emph{whole stream model}.
Recently, it is suggested that ED can be an indicator of
network quality~\cite{GKT10}. The items of the stream correspond to
the packets transmitted through a network,
each associated with a sequence number. Ideally, the
packets would arrive in increasing order of the sequence number.
Yet network congestion would result in packet retransmission
and distortion in the packet arrival order, which leads
to a large ED value.
One of the main causes to network congestion is
that traffic is often bursty.
Thus, the network quality can be measured more accurately
if the measurement is based on only recent traffic.
To this end, we propose studying the \emph{sliding window model} where
we estimate the ED of a window covering the latest $w$ items in the stream.
Here $w$ is a positive integer representing the window size.
The sliding window model is no easier than the whole data stream model because
when $w$ is set to be infinity, we need to estimate
ED for all items arrived.

\runtitle{Our results.}
We give a deterministic $(4+\epsilon)$-approximate algorithm
for estimating ED in a sliding window. The space usage is
$O( \frac{1}{\epsilon^2}\log^2(\epsilon w))$, where $w$ is
the window size.
Our algorithm is a generalization
of the algorithm by Gopalan et al. \cite{GJK+07}.
In particular, Gopalan et al. show that ED of the whole stream
can be approximated by the number of ``inverted'' items $j$ such that many items
arrived before $j$ has a value bigger than $j$.
We extend this definition
to the sliding window model. Yet, maintaining the number
of inverted items in a sliding window is non-trivial.
An item $j$ may be inverted when it arrives, but it may
become not inverted due to the expiry of items \emph{arrived earlier}.
We give an interesting algorithm to estimate the number of inverted items
using existing results
on basic counting and quantile estimation over sliding windows.
Our algorithm also incorporates an idea in~\cite{ErJ08}
to remove randomization.

We also consider two extensions of the problem.

{\it $\bullet$ Partial ordered items.}
In some applications, each item arrived is associated
with multiple attributes, e.g., a network packet may contain
both the IP address of the sender and a sequence number.
To measure the network quality,
it is sometimes useful to estimate the \emph{most congested} traffic
coming from a particular sender.
This corresponds to estimating the ED of packets
with respect to sequence number from the same sender IP address.
In this case,
only sequence numbers with the same IP address can be ordered.
We model such a situation by considering items
each associated with a value drawn from a partial ordered universe.
We are interested in estimating the minimum number of
items to delete so that the remaining items are sorted
with respect to the partial order. %Using similar idea in~\cite{GJK+07},
We give a randomized $(4+\epsilon)$-approximate algorithm
using $O( \frac{1}{\epsilon^2} \log \epsilon^2 w \log w)$ space.
%Technically, we design a sampling method which allows \mnote{now we are using the sampling method from \cite{GJK+07}, maybe we should design a simpler version?}
%us to approximate the ``quantile'' information for
%partially ordered items.

{\it $\bullet$ Out-of-order streams.}
When a sender transmits packets to a receiver through
a network, the packets will go through some intermediate routers.
To measure the quality of the route between
the sender and an intermediate router,
it is desirable to estimate the ED of the packets received
by the router from the sender.
Yet in some cases, the router may not be powerful enough to deploy
the algorithm for estimating the ED.
We consider delegating the task of estimation
to the receiver.
To model the situation, whenever a packet arrives,
the intermediate router marks in the packet a timestamp recording
the number of packets~received~thus far
(which can be done by maintaining a single counter).
Hence, when the packets
arrive at the receiver, each packet has both a sequence number
assigned by the sender and a timestamp marked by the router.
Note that the packets arrived at the receiver may be out-of-order
with respect to the timestamp.
Such stream corresponds to an \emph{out-of-order~stream}.

To measure the
network quality between the sender and the router,
the receiver can estimate the ED with respect to the sequence number
when the items are arranged in increasing order of the timestamps.
Intuitively, the problem is difficult as items can
be inserted in arbitrary positions of the sequence according
to the timestamp.
We show strong space lower bounds even in the whole stream model.
In particular, any randomized constant-approximate
algorithm for estimating ED of an out-of-order stream requires $\Omega( n )$ space,
where $n$ is the number of items arrived so far.
An identical lower bound holds for estimating the LIS.
Like most streaming lower bounds, our lower bounds are proved based on
reductions from two communication problems, namely, the \ind~problem and the
\disj~problem.  Optimal communication lower bounds for randomized protocols are
known for both problems \cite{Abl96,Jay09}.

\comment{
\runtitle{Techniques.}
For the problem of estimating ED in the whole data stream, existing algorithms try to identify a
subset of ``bad'' items, who has a significant effect on the sortedness of the stream.
At the same time, the
size of the subset is a good approximation to the actual ED of the stream.  Therefore, the algorithm
only need to maintain a counter for the number of ``bad'' items, which will never decrease.  However,
for the sliding window version of this problem, as the window slides, ``bad'' items may eventually
become ordinary items even if they have not expired themselves. So one counter will not
work in this case.  We
deal with this situation by modeling those ``bad'' items as an out-of-order substream.  By employing
a basic counting structure for out-of-order streams, the size of the ``bad'' subset can be estimated
dynamically.

For lower bounds in out-of-order streams, we use different methods for ED and LIS.  For
ED, we construct a large set of input streams such that a correct
algorithm needs to differentiate any two of them (i.e. different memory states). Therefore, the
algorithm needs to use space the logarithm of the total number of input streams.  On the other hand,
we prove the lower bound for LIS through a reduction from a communication
problem, whose communication lower bound is known.  Both bounds are deterministic and linear in the
size of the stream for constant approximation.
}

%\vspace{.1in}
\runtitle{Organization.}
Section~\ref{sec:definitions} and~\ref{sec:main} give the formal problem definitions
and our main algorithm for
estimating ED, respectively.
Section~\ref{sec:extensions} considers out-of-order streams.
Due to the page limit, extension to
partial ordered items is left to the full paper.

\section{Formal problem definitions}\label{sec:definitions}

\runtitle{Sortedness of a stream.}
Consider a stream $\sigma$ of $n$ items,
$\langle \sigma(1), \sigma(2), \dots, \sigma(n) \rangle$
where each $\sigma(i)$ is drawn
from $[m] = \{1, 2, \dots, m\}$.
The \emph{edit distance to monotonicity} (ED) of $\sigma$,
denoted by $\ed(\sigma)$,
is the minimum number of items required
to remove so as to obtain an increasing subsequence of $\sigma$,
i.e., $\langle \sigma(i_1), \sigma(i_2), \dots, \sigma(i_k) \rangle$
such that $\sigma(i_1) \le \sigma(i_2) \le \cdots \le \sigma(i_k)$
for some $1 \le i_1 < i_2 < \cdots < i_k \le n$.
We use $\lis(\sigma)$ to denote the
\emph{length of the longest increasing subsequence} (LIS) of $\sigma$.
Note that $\lis(\sigma) = n - \ed(\sigma)$.
The sortedness can be computed based
on the \emph{whole stream} (all items in $\sigma$ received thus far)
or a \emph{sliding window} covering the most recent $w$ items,
denoted by $\sw$, for $w \ge 1$.
Note that the whole stream model can be viewed as
a special case of the sliding window model with window size $w = \infty$.
A streaming algorithm has only limited space
and can only maintain an estimate on
the sortedness of $\sw$.
For any $r \ge 1$,
a $r$-approximate algorithm for estimating $\ed(\sw)$
returns, at any time,
an estimate $\aed(\sw)$
such that $\ed(\sw) \le \aed(\sw) \le r \cdot \ed(\sw)$.
We can define a $r$-approximate algorithm
for estimating $\lis(\sw)$ similarly.

\runtitle{Partial ordered universe.}
We also consider a partial ordered universe
with binary relation $\preceq$.
A subsequence of $\sigma$ with length $\ell$,
$\langle \sigma(i_1), \sigma(i_2), \cdots, \sigma(i_\ell) \rangle$,
is increasing
if for any $k \in [\ell-1]$, $\sigma(i_k) \preceq \sigma(i_{k+1})$.
Then for any window size $w \ge 1$,
$\ed(\sw)$ and $\lis(\sw)$ can be defined analogously as before.

\runtitle{Out-of-order stream.}
The data stream described above is an \emph{in-order} stream,
which assumes items arriving in the same order as their creation time.
In an \emph{out-of-order stream},
each item is associated with a distinct integral time-stamp recording
its creation time, which may be different from its arrival time.
Precisely, an out-of-order stream $\sigma$
is a sequence of tuples $\langle t_i, v_i\rangle$ ($i \in [n]$)
where $t_i$ and $v_i$ are the timestamp and value of the $i$-th item.
The sortedness of $\sigma$ is defined based on the permuted sequence
$V(\sigma) = \langle v_{i_1}, v_{i_2}, \dots, v_{i_n} \rangle$
such that $t_{i_1} \le t_{i_2} \le \cdots \le t_{i_n}$, i.e.,
$\ed(\sigma) := \ed(V(\sigma))$ and $\lis(\sigma) := \lis(V(\sigma))$.

\section{A $(4+\epsilon)$-approximate algorithm for estimating ED}\label{sec:main}

In this section, we consider a stream $\sigma$ of items with values drawn
from a set $[m] = \{1, 2, \dots, m\}$,
and we are interested in estimating the ED of a sliding window covering
the most recent $w$ items in~$\sigma$.
We give a deterministic $(4+\epsilon)$-approximate algorithm
which uses $O( \frac{1}{\epsilon^2}\log^2(\epsilon w ))$ space.

Our algorithm is based on an estimator $R(i)$, which is a generalization of the estimator in~\cite{GJK+07}
to the sliding window model.
Let $i$ be the index of the latest arrived item.
The sliding window we consider is
$\swi = \langle \sigma(i-w+1), \sigma(i-w+2), \dots, \sigma(i) \rangle$.
For any item $\sigma(j)$, let $inv(j)$ be the set of items
arrived before $\sigma(j)$ but have greater values than
$\sigma(j)$, i.e., $inv(j) = \{ k : k < j \mbox{ and } \sigma(k) > \sigma(j)\}$.
We define an estimator $R(i)$ for $\ed(\swi)$ as follows.

\begin{definition}
Consider the current sliding window $\swi$.
We define $R(i)$ to be the set of indices $j \in [i-w+1, i]$ such that
there exists $k \in [i-w+1, j-1]$ with $|[k, j-1] \cap inv(j)|
> \frac{j-k}{2}$.
\end{definition}

%Using the same argument as in~\cite{GJK+07}, we can prove Lemma~\ref{lem:estimator} below.

\begin{lemma}[\cite{GJK+07}] \label{lem:estimator}
$\ed( \sigma_{[i-w+1,i]}) /2 \le |R(i)| \le 2 \cdot \ed( \sigma_{[i-w+1,i]})$.
\end{lemma}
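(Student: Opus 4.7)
The plan is to prove the two inequalities separately, adapting the whole-stream argument of \cite{GJK+07}; the adaptation is transparent because the definition of $R(i)$ already restricts the witness index $k$ to lie within $\swi$, so every local step of the proof stays inside the current window. Throughout, fix an optimum LIS $L^* \subseteq \swi$ and set $D = \swi \setminus L^*$, so $|D| = \ed(\swi)$.

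For the upper bound $|R(i)| \le 2\ed(\swi)$, I would partition $R(i) = (R(i) \cap D) \cup (R(i) \cap L^*)$. The first piece satisfies $|R(i) \cap D| \le |D|$ trivially, so the work is in bounding $|R(i) \cap L^*|$. For each $j \in R(i) \cap L^*$, fix a witness $k_j$. Because $L^*$ is increasing and contains $\sigma(j)$, every $L^*$-predecessor of $j$ (inside the window) has value at most $\sigma(j)$, so no $L^*$-element lies in $inv(j)$; thus $[k_j, j-1] \cap inv(j) \subseteq D$, and the strict majority in the definition of $R(i)$ yields strictly more than $(j-k_j)/2$ elements of $D$ in the witness interval. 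A Hall-type matching argument then produces an injection from $R(i) \cap L^*$ into $D$ that sends each $j$ to a $D$-element of its witness interval, giving $|R(i) \cap L^*| \le |D|$ and hence $|R(i)| \le 2|D|$.

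For the lower bound $\ed(\swi) \le 2|R(i)|$, I would produce an increasing subsequence of $\swi$ of length at least $|\swi| - 2|R(i)|$. Let $T = \swi \setminus R(i)$, so $|T| = |\swi| - |R(i)|$ and every $j \in T$ satisfies the dual local-majority property that for every $k$ in the window with $k < j$ at least half of the positions in $[k, j-1]$ have value at most $\sigma(j)$. The goal is to show $\ed(T) \le |R(i)|$, which combined with $\lis(\swi) \ge \lis(T) = |T| - \ed(T)$ yields $\lis(\swi) \ge |\swi| - 2|R(i)|$. I would do this by a forward greedy sweep through $T$ that extends a current increasing subsequence whenever possible; whenever a $T$-element is skipped, the local-majority property of the surrounding $T$-elements forces an intermediate index that violates the witness threshold and therefore belongs to $R(i)$, and each skip can be charged injectively to such an index.

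The main obstacle in both directions is the charging step, that is, rigorously turning the strict majority count in the witness condition into an actual injection (Hall-type matching above, skip-to-witness charging below). Once that is set up, the numerical bookkeeping is routine, and nothing in the argument is sensitive to the window boundary since the witness $k$ in the definition of $R(i)$ is required to lie in $[i-w+1, j-1]$ from the outset.
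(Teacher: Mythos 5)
The paper itself does not prove this lemma — it quotes it from \cite{GJK+07} — but the closest in-paper argument is the proof of Lemma~\ref{lem:approx-estimator}, which handles the lower-bound direction by a \emph{right-to-left} interval-pruning, and that comparison exposes the main problem with your plan: your lower-bound mechanism does not work as described. Consider a window with values $10,1,2,3,4$ (positions $1,\dots,5$). Here $R(i)=\{2\}$ (only the item of value $1$ has a majority-inverted prefix interval) and $\ed(\swi)=1$, so the lemma holds; but $T=\{1,3,4,5\}$ carries values $10,2,3,4$, and a forward greedy sweep that ``extends whenever possible'' selects $10$ and then skips every later $T$-element. There are three skips and only one index in $R(i)$, so no injective skip-to-$R(i)$ charging can exist; worse, the greedy subsequence has length $1$ while $\lis(T)=3$, so even a perfect charging of skips could not yield $\ed(T)\le |R(i)|$ — the forward-greedy structure itself is the wrong one. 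The standard (and the paper's, for $R'(i)$) argument instead scans from the right: starting from a sentinel $x$ with $\sigma(x)=\infty$, repeatedly jump to the largest $j<x$ in the window with $j\notin R(i)\cup inv(x)$ and delete the whole gap $[j+1,x-1]$; because the anchor $x$ is \emph{not} in $R(i)$, at most half of any prefix interval ending at $x-1$ lies in $inv(x)$, hence at least half of each deleted gap lies in $R(i)$, and the deletions are charged interval-wise rather than element-by-element. Your skipped-element charging cannot be repaired without essentially switching to this argument (or to one that deletes the earlier ``blocking'' items rather than the newcomers).

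For the upper bound, the decomposition $R(i)=(R(i)\cap D)\cup(R(i)\cap L^*)$ and the observation that for $j\in R(i)\cap L^*$ the witness interval satisfies $[k_j,j-1]\cap inv(j)\subseteq D$, hence is strictly majority-$D$, are correct. However, the entire weight of this direction rests on the ``Hall-type matching'' producing an injection from $R(i)\cap L^*$ into $D$, which you explicitly defer as ``the main obstacle.'' The claim is true — Hall's condition can be verified on each maximal block of the union of witness intervals via a $\pm1$ prefix-sum potential ($+1$ on $D$, $-1$ elsewhere; the witness condition says the prefix sum at $j-1$ strictly exceeds the running minimum, and each element of $R(i)\cap L^*$ consumes one unit of this surplus) — but no such argument is given, so as written this direction is also incomplete; once supplied, it is in fact a genuinely different (and somewhat heavier) route than the single pruning sweep that yields both constants in \cite{GJK+07}.
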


Hence, if we know $|R(i)|$,
we can return $2 |R(i)|$ as an estimation for $\ed(\sigma_{[i-w+1,i]})$
and it gives a 4-approximation algorithm.
However, maintaining $R(i)$ exactly requires space linear
to the window size. In the following, we show how to approximate
$R(i)$ using significantly less space.

\subsection{Estimating $R(i)$}
We first present our algorithm and then show that it
can approximate $R(i)$. Our algorithm will
make use of two data structures.
Let $\eps'$ be a constant in $(0,1)$ (which will be
set to $\epsilon/35$ later).

{\bf $\epsilon'$-approximate quantile data structure
\boldmath $\mathcal{Q}$}:
Let $Q$ be a set of items. The rank of an item in~$Q$
is its position in the list formed by sorting $Q$
from the smallest to the biggest. For any $\phi \in [0,1]$,
the $\epsilon'$-approximate $\phi$-quantile of $Q$
is an item with rank in $[(\phi-\epsilon')|Q|,(\phi+\epsilon')|Q|]$.
We maintain an $\epsilon'$-approximate $\phi$-quantile
data structure given in~\cite{LLX+04} which
can return, at any time, an $\epsilon'$-approximate
$\phi$-quantile of the most recent $w'$ items
for any $w'\le w$.
This data structure takes
$O(\frac{1}{(\epsilon')^2}\log ^2 (\epsilon' w))$ space.

{\bf $\epsilon'$-approximate basic counting data structure \boldmath$\mathcal{B}$}:
When an item $\sigma(i)$ arrives, we may
associate a token with some item $\sigma(k)$
where $k < i$. The association is permanent and
an item may be associated with more than one token.
At any time, we are interested in the number
of tokens associated with the most recent $w$ items.
We view it as a stream $\sigma_{\rm token}$ of tokens,
each of which has a timestamp $k$ if it is associated to $\sigma(k)$,
and we want to return the number of tokens with timestamp in $[i-w+1, i]$.
Note that the tokens may be out-of-order with respect to the timestamp,
leading to the basic counting problem for out-of-order stream
considered in~\cite{CKT08}.
We maintain their $\epsilon'$-approximate basic counting data structure
on $\sigma_{\rm token}$ which can return, at any time, an estimate $\hat t$
such that $| \hat t - t| \le \epsilon' t$, where
$t$ is the number of tokens associated with
the latest $w$ items.
It takes
$O( \frac{1}{\epsilon'} \log w \log(\frac{\epsilon' B}{\log w}))$ space,
where $B$ is the maximum number of tokens associated within
any window of $w$ items.
As we may associate one token upon any
item arrival, $B$ is at most~$w$. %in our case.

We are now ready to define our algorithm, as follows.
\begin{center}
\vspace{-.1in}
  \begin{tabular}{l}
    \hline
    {\bf Algorithm 1.} Estimating ED in sliding windows\\
    \hline
    {\bf Item arrival:} Upon the arrival of item $\sigma(i)$, do \\
    ~ ~ For $k=i-1,i-2,\cdots,i-w+1$\\
    ~ ~ ~ ~ ~Query $\mathcal{Q}$ for the $(\frac{1}{2}-\epsilon')$-quantile of $\sigma_{[k, i-1]}$.
             Let $a$ be the returned value.\\
    ~ ~ ~ ~ ~If $a>\sigma(i)$,
    associate a token to $\sigma(k)$, i.e., add an item with timestamp $k$\\
    ~ ~ ~ ~ ~to the stream $\sigma_{\rm token}$.
                  Break the for loop. \vspace{.1in}\\
   {\bf Query:}
   Query $\mathcal{B}$ on the stream $\sigma_{\rm token}$ for the number of tokens associated
   with\\
   the last $w$ items and
   let $\hat t$ be the returned answer.
   Return $\hat t / (\frac{1}{2}-2\epsilon')( 1-\epsilon')$ as\\
   the estimation $\aed( \sigma_{[i-w+1,i]})$.   \\
    \hline
  \end{tabular}
\end{center}

%Let $\sigma_{[i-w+1,i]}$ be the current window.
Let $R'(i)$ be
the set of indices $j$ such that
when $\sigma(j)$ arrives, we associate a token
to an item $\sigma(k)$ where $k \in [i-w+1, i]$.
Observe that $R'(i)$ is an approximation of $R(i)$
in the following sense.

\begin{lemma}\label{lem:R'}
$R'(i)$ contains \emph{all} indices $j \in [i-w+1, i]$ satisfying
that there exists $k \in [i-w+1, j-1]$ such that
$|[k, j-1]\cap inv(j)| > (\frac{1}{2} + 2\epsilon')(j-k)$.
Furthermore, \emph{all} indices $j$ contained in
$R'(i)$ must satisfy that
there exists $k \in [i-w+1, j-1]$ such that
$|[k, j-1]\cap inv(j)| > \frac{j-k}{2}$.
\end{lemma}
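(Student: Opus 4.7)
The plan is to track how the $\epsilon'$-approximation guarantee of $\mathcal{Q}$ translates directly into a two-sided sandwich on $R'(i)$. The common mechanism is: when the algorithm, at the arrival of $\sigma(j)$, queries $\mathcal{Q}$ for the $(\frac{1}{2}-\epsilon')$-approximate quantile $a$ of $\sigma_{[k,j-1]}$, its rank $r$ in that window lies in $[(\frac{1}{2}-2\epsilon')(j-k),\,\frac{1}{2}(j-k)]$, so the comparison of $a$ with $\sigma(j)$ faithfully reflects whether a near-majority of items in $\sigma_{[k,j-1]}$ are inversions of $\sigma(j)$. I would prove the two halves of the lemma by exploiting the two endpoints of this rank interval respectively.

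For the first part (completeness), I would fix any $j\in[i-w+1,i]$ together with a witness $k^{*}\in[i-w+1,j-1]$ satisfying $|[k^{*},j-1]\cap inv(j)| > (\frac{1}{2}+2\epsilon')(j-k^{*})$. Since $j\le i$, we have $k^{*}\ge i-w+1\ge j-w+1$, so $k^{*}$ lies in the for-loop range executed at the arrival of $\sigma(j)$. Either the loop breaks at some $k'\in[k^{*},j-1]$, in which case the token is attached to $\sigma(k')$ with $k'\in[i-w+1,i]$ and $j\in R'(i)$; or the loop reaches $k^{*}$ itself. In the latter case, fewer than $(\frac{1}{2}-2\epsilon')(j-k^{*})$ items of $\sigma_{[k^{*},j-1]}$ have value $\le\sigma(j)$, while the returned quantile $a$ has rank at least $(\frac{1}{2}-2\epsilon')(j-k^{*})$, so $a$ cannot be one of those ``small'' items and $a>\sigma(j)$. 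The algorithm then associates a token to $\sigma(k^{*})\in[i-w+1,i]$, so $j\in R'(i)$.

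For the second part (soundness), suppose $j\in R'(i)$. Then at the arrival of $\sigma(j)$ a token was attached to some $\sigma(k)$ with $k\in[i-w+1,i]$. Since the inner loop only considers $k\in[j-w+1,j-1]$ and $j\le i$, we in fact have $k\in[i-w+1,j-1]$. The queried quantile $a$ of $\sigma_{[k,j-1]}$ satisfied $a>\sigma(j)$ and has rank $r\le\frac{1}{2}(j-k)$. Because $a$ is itself an \emph{item} of $\sigma_{[k,j-1]}$ whose value is strictly larger than $\sigma(j)$, it contributes to the $r$ items of value $\le a$ without contributing to the items of value $\le\sigma(j)$; hence at most $r-1<\frac{1}{2}(j-k)$ items have value $\le\sigma(j)$, giving $|[k,j-1]\cap inv(j)|>\frac{j-k}{2}$ as required.

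The only real obstacle I anticipate is bookkeeping on indices: the witness $k^{*}$ in the first part must lie inside both the for-loop range $[j-w+1,j-1]$ and the current window $[i-w+1,i]$, and symmetrically the index $k$ recovered in the second part must still lie in the current window; both inclusions reduce to the single inequality $j\le i$. The one genuinely delicate point is the strict ``$>$'' in the soundness statement, which I would recover by the observation above that the approximate quantile is itself an item of $\sigma_{[k,j-1]}$, not just an abstract threshold, so its own value accounts for one inversion that is forced to sit outside the ``$\le\sigma(j)$'' slots and tightens the inequality by one.
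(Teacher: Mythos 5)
Your proposal is correct and follows essentially the same argument as the paper: both halves are obtained from the two-sided rank guarantee of the $\epsilon'$-approximate $(\frac{1}{2}-\epsilon')$-quantile, with rank at least $(\frac{1}{2}-2\epsilon')(j-k)$ forcing $a>\sigma(j)$ when the inversion fraction exceeds $\frac{1}{2}+2\epsilon'$, and rank at most $\frac{1}{2}(j-k)$ together with $a>\sigma(j)$ giving more than $\frac{j-k}{2}$ inversions. Your extra bookkeeping (the loop possibly breaking at a larger $k'$, and the strict inequality coming from $a$ itself being an item of $\sigma_{[k,j-1]}$) only makes explicit what the paper's proof leaves implicit.
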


\begin{proof}
An index $j$ is in $R'(i)$ if
$\sigma(j) < a$ when $\sigma(j)$ arrives, where
$a$ is the $\epsilon'$-approximate $(\frac{1}{2} - \epsilon')$-quantile
for some interval $\sigma_{[k, j-1]}$. Note that the rank of $a$
in $\sigma_{[k,j-1]}$ is at least $(\frac{1}{2} - 2\epsilon')(j-k)$.
Therefore, if $|[k, j-1]\cap inv(j)| > (\frac{1}{2} + 2\epsilon')(j-k)$,
the rank of $\sigma(j)$ is less than $(j-k) - (\frac{1}{2} + 2\epsilon')(j-k)
= (\frac{1}{2} - 2\epsilon')(j-k)$,
so $\sigma(j) < a$ and $j$ must be included in $R'(i)$.
On the other hand,
the rank of $a$ in $\sigma_{[k,j-1]}$ is at most $\frac{j-k}{2}$.
Since $a > \sigma(j)$, we conclude that
all indices $j \in R'(i)$ satisfy $|[k, j-1]\cap inv(j)| > \frac{j-k}{2}$.
\hfill\qed
\end{proof}

We show that $|R'(i)|$ is a good approximation
for $\ed(\swi)$, as follows.

\begin{lemma} \label{lem:approx-estimator}
$(\frac{1}{2}-2\epsilon')\cdot \ed(\swi) \le |R'(i)|
\le 2 \cdot \ed(\swi)$.
\end{lemma}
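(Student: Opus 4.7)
I would prove the two inequalities separately, each by reducing to (an appropriate form of) Lemma~\ref{lem:estimator}.

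For the \emph{upper bound}, the second half of Lemma~\ref{lem:R'} says every $j\in R'(i)$ admits some witness $k$ with $|[k,j-1]\cap inv(j)|>(j-k)/2$, i.e.\ it meets the defining condition of $R(i)$. Hence $R'(i)\subseteq R(i)$, and Lemma~\ref{lem:estimator} immediately gives $|R'(i)|\le|R(i)|\le 2\,\ed(\swi)$.

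For the \emph{lower bound}, let $R^*(i)$ denote the set of indices $j\in[i-w+1,i]$ for which some $k\in[i-w+1,j-1]$ satisfies $|[k,j-1]\cap inv(j)|>(\tfrac{1}{2}+2\eps')(j-k)$. The first half of Lemma~\ref{lem:R'} gives $R^*(i)\subseteq R'(i)$, so it suffices to prove the threshold-$(\tfrac{1}{2}+2\eps')$ analogue of the lower-bound half of Lemma~\ref{lem:estimator}, namely $|R^*(i)|\ge(\tfrac{1}{2}-2\eps')\,\ed(\swi)$. I would obtain this by repeating the lower-bound argument of~\cite{GJK+07} verbatim, replacing $\tfrac{1}{2}$ by $\tfrac{1}{2}+2\eps'$ throughout. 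That argument exhibits an increasing subsequence of $\swi\setminus R^*(i)$ whose length, as a fraction of $|\swi\setminus R^*(i)|$, depends only on the threshold; tracing through the dependence yields $\ed(\swi\setminus R^*(i))\le\tfrac{1/2+2\eps'}{1/2-2\eps'}\,|R^*(i)|$. Combining this with the trivial bound $\ed(\swi)\le|R^*(i)|+\ed(\swi\setminus R^*(i))$ gives $\ed(\swi)\le |R^*(i)|/(\tfrac{1}{2}-2\eps')$, which rearranges to the required inequality.

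The main obstacle is the threshold-shifted extraction argument in the previous paragraph; everything else is straightforward unpacking of Lemma~\ref{lem:R'}. I do not expect any surprises there, since the original proof is insensitive to the exact value of the threshold, relying only on its being bounded away from $0$ and $1$ (as is the case for $\tfrac{1}{2}+2\eps'$ when $\eps'$ is small).
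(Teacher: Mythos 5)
Your proposal is correct and takes essentially the same route as the paper: the upper bound is identical ($R'(i)\subseteq R(i)$ via the second half of Lemma~\ref{lem:R'} plus Lemma~\ref{lem:estimator}), and your lower bound is the same threshold-shifted pruning argument of~\cite{GJK+07}. The only (cosmetic) difference is bookkeeping: the paper runs the pruning directly on $R'(i)$, using the contrapositive of the first half of Lemma~\ref{lem:R'} to show each deleted interval has an $R'(i)$-fraction at least $\frac{1}{2}-2\epsilon'$ and charging all deleted items against $\ed(\swi)$, whereas you route through the intermediate set $R^*(i)$ and a subadditivity step $\ed(\swi)\le |R^*(i)|+\ed(\swi\setminus R^*(i))$, which yields the same constant.
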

\begin{proof}
  We observe that by Lemma~\ref{lem:R'},
  any index $j$ in $R'(i)$ must be also
  in $R(i)$. Hence, $R'(i) \subseteq R(i)$ and
  $|R'(i)| \le |R(i)| \le 2 \cdot \ed(\swi)$
  (by Lemma~\ref{lem:estimator}).

  Now, we show $(\frac{1}{2} - 2\epsilon')\cdot \ed(\swi) \le |R'(i)|$ by giving an iterative pruning procedure to obtain an
  increasing subsequence (may not be the longest).  First let
  $x = i+1$ and
  $\sigma(x) = \infty$.  Find the largest $j$ such that $i-w+1\le j<x$ and
  $j\notin R'(i) \cup inv(x)$ and delete the interval $[j+1,x-1]$.
  We then let $x=j$ and repeat the process
  until no such $j$ is found.
  As each~$x$ is not in $R'(i)$,
  Lemma~\ref{lem:R'} implies that in every interval
  that we delete, the fraction of items of $R'(i)$ is at
  least $(\frac{1}{2}-2\epsilon')$.
  Note that eventually all items in $R'(i)$ will be deleted.
  Thus, $|R'(i)|
  \ge (\frac{1}{2}-2\epsilon')\cdot \mbox{(number of deleted items)} \ge
  (\frac{1}{2}-2\epsilon')\cdot \ed(\swi)$.
\hfill\qed
\end{proof}

Note that $|R'(i)|$ equals the number of tokens associated
with the most recent $w$ items. Since $\cal B$ is only an $\eps'$-approximate
data structure, the value $\hat t$ returned only satisfies that
$(1-\epsilon')|R'(i)| \le \hat t \le (1+\epsilon')|R'(i)|$.
Since we report $\aed(\swi) = \hat t / (\frac{1}{2}-2\epsilon')( 1-\epsilon')$ as
the estimation, we conclude with the following approximation ratio.

\begin{lemma}
$\ed(\swi) \le \aed(\swi)
\le \frac{2(1+\epsilon')}{(1/2-2\epsilon')(1-\epsilon')} \cdot \ed(\swi)$
\end{lemma}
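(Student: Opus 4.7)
The plan is simply to chain the two approximation guarantees already established, namely the accuracy of $|R'(i)|$ as an estimator of $\ed(\swi)$ (Lemma~\ref{lem:approx-estimator}) and the relative error guarantee of the basic counting structure $\mathcal{B}$ on the token stream $\sigma_{\rm token}$, which gives $(1-\epsilon')|R'(i)| \le \hat t \le (1+\epsilon')|R'(i)|$. Recall that $\aed(\swi)$ is defined to be $\hat t / ((\tfrac{1}{2}-2\epsilon')(1-\epsilon'))$, so both bounds will follow by dividing the combined estimates by this same scaling constant.

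For the lower bound $\ed(\swi) \le \aed(\swi)$, I would chain $\hat t \ge (1-\epsilon')|R'(i)| \ge (1-\epsilon')(\tfrac{1}{2}-2\epsilon')\,\ed(\swi)$, using the left inequality of Lemma~\ref{lem:approx-estimator}. Dividing by the normalising factor $(1-\epsilon')(\tfrac{1}{2}-2\epsilon')$ yields exactly $\aed(\swi) \ge \ed(\swi)$.

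For the upper bound, I would likewise write $\hat t \le (1+\epsilon')|R'(i)| \le 2(1+\epsilon')\,\ed(\swi)$, using the right inequality of Lemma~\ref{lem:approx-estimator}, and then divide by the same normalising factor to obtain $\aed(\swi) \le \frac{2(1+\epsilon')}{(\tfrac{1}{2}-2\epsilon')(1-\epsilon')}\,\ed(\swi)$, matching the claimed ratio.

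There is essentially no combinatorial obstacle here; the lemma is a bookkeeping step that composes the two approximation errors. The only thing worth verifying is that the denominator $(\tfrac{1}{2}-2\epsilon')(1-\epsilon')$ is strictly positive, so that the division is well defined and the ratio is finite. This is guaranteed by the intended setting $\epsilon' = \epsilon/35$ with $\epsilon \in (0,1)$, which forces $\epsilon' < 1/4$. Once this is noted, the proof reduces to two short inequality chains and requires no further work.
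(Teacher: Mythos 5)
Your proposal is correct and matches the paper's own (implicit) argument: the paper derives this lemma exactly by combining the bound $(\frac{1}{2}-2\epsilon')\,\ed(\swi) \le |R'(i)| \le 2\,\ed(\swi)$ of Lemma~\ref{lem:approx-estimator} with the guarantee $(1-\epsilon')|R'(i)| \le \hat t \le (1+\epsilon')|R'(i)|$ of the basic counting structure, then dividing by the normalising factor $(\frac{1}{2}-2\epsilon')(1-\epsilon')$. Your added remark that this factor is positive for $\epsilon' = \epsilon/35$ is a harmless sanity check; nothing further is needed.
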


For any $\epsilon \le 1$, we can set $\epsilon' = \epsilon/35$.
Then, $\frac{2(1+\epsilon')}{(1/2-2\epsilon')(1-\epsilon')} \cdot \ed(\swi) \le (4+\epsilon)\cdot \ed(\swi)$.
The total space usage of the two data structures is
$O(\frac{1}{\eps^2}\log ^2(\eps w) + \frac{1}{\eps} \log w \log(\eps w))$.
If $\eps > \frac{1}{w}$, $\log w = O(\frac{1}{\eps} \log (\eps w))$
and thus the total space usage is $O(\frac{1}{\eps^2}\log ^2(\eps w))$.
Otherwise, we can store all items in the window, which
only requires $O(w) = O(\frac{1}{\eps})$ space.

\vspace{.1in}
\runtitle{Improving the running time.}  The per-item update time of the algorithm is $O(w)$
because the algorithm checks the interval $I = [k, i-1]$ for every length $|I| \in [w-1]$.
An observation in~\cite{ErJ08} is that an $\frac{\epsilon'}{2}$-approximate
$\phi$-quantile of an interval with length $|I|$ is also an $\epsilon'$-approximate $\phi$-quantile
for all intervals with length $|I|+1, \cdots, (1+\frac{\epsilon'}{2})|I|$. Hence we only need to check
$O(\frac{1}{\epsilon'} \log w)$ intervals of length $1,2,\cdots,(1+\frac{\epsilon'}{2})^i,(1+\frac{\epsilon'}{2})^{i+1},\cdots,w$.
Then we obtain an
$\epsilon'$-approximate quantile for every interval.
Note that the query time for returning an approximate quantile
is $O(\frac{1}{\epsilon'} \log^2 w)$,
and the per-item update time of the two data structures is
$O(\frac{1}{\epsilon^2}\log ^3 w)$~\cite{CKT08,LLX+04}.
We conclude with the main result of this section.

%\vspace{.1in}
%We conclude with the main result of this section.

\begin{theorem}
There is a deterministic $(4+\epsilon)$-approximate algorithm
for estimating ED in a sliding window of the latest $w$ items.
The space usage is $O(\frac{1}{\epsilon^2}\log ^2(\epsilon w))$
and the per-item update time is
$O(\frac{1}{\epsilon^2}\log ^3 w)$.
\end{theorem}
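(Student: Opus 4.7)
The plan is to package the three ingredients already assembled in this section into a single statement. First, for the approximation guarantee I would invoke the preceding lemma giving $\ed(\swi)\le \aed(\swi)\le \frac{2(1+\epsilon')}{(1/2-2\epsilon')(1-\epsilon')}\cdot\ed(\swi)$, then set $\epsilon'=\epsilon/35$ and verify by a routine calculation that the resulting ratio is at most $4+\epsilon$ whenever $\epsilon\le 1$. For $\epsilon>1$ the claim is trivial since any $5$-approximation already suffices.

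Second, for the space bound I would add up the costs of the two data structures used by Algorithm~1: the $\epsilon'$-approximate quantile structure $\mathcal{Q}$ contributes $O(\frac{1}{\epsilon'^2}\log^2(\epsilon' w))$, and the out-of-order basic counting structure $\mathcal{B}$ contributes $O(\frac{1}{\epsilon'}\log w\log(\epsilon' B/\log w))$ with $B\le w$. Substituting $\epsilon'=\Theta(\epsilon)$ yields $O(\frac{1}{\epsilon^2}\log^2(\epsilon w)+\frac{1}{\epsilon}\log w\log(\epsilon w))$. I would then split into two regimes: when $\epsilon>1/w$ we have $\log w=O(\frac{1}{\epsilon}\log(\epsilon w))$, which absorbs the second term into the first; when $\epsilon\le 1/w$ the window itself has only $O(1/\epsilon)$ items, and we can store it verbatim. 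Either way the bound is $O(\frac{1}{\epsilon^2}\log^2(\epsilon w))$.

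Third, for the per-item update time I would explain how to avoid the naive $\Theta(w)$ loop over all starting indices $k$. Following the observation attributed to Ergun and Jowhari, an $\frac{\epsilon'}{2}$-approximate $\phi$-quantile for an interval of length $|I|$ is also an $\epsilon'$-approximate $\phi$-quantile for every interval of length between $|I|$ and $(1+\frac{\epsilon'}{2})|I|$, since the rank shifts by at most $\frac{\epsilon'}{2}|I|$. Therefore it suffices to probe only the $O(\frac{1}{\epsilon'}\log w)=O(\frac{1}{\epsilon}\log w)$ interval lengths forming a geometric progression with ratio $1+\frac{\epsilon'}{2}$ up to $w$. Each quantile query to $\mathcal{Q}$ costs $O(\frac{1}{\epsilon'}\log^2 w)$, and the per-item updates to $\mathcal{Q}$ and $\mathcal{B}$ each cost at most $O(\frac{1}{\epsilon^2}\log^3 w)$ from the cited works. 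Multiplying and summing yields a total per-item update time of $O(\frac{1}{\epsilon^2}\log^3 w)$, matching the claim.

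The only mildly delicate step is the running-time argument: I need to be careful that using a single $\frac{\epsilon'}{2}$-approximate quantile in place of many exact queries preserves the conclusion of Lemma~\ref{lem:R'} when $\epsilon'$ is re-parameterised; concretely, I would redo the algorithm with quality parameter $\frac{\epsilon'}{2}$ in $\mathcal{Q}$, reapply the rank inequalities inside the proof of that lemma with an extra additive $\frac{\epsilon'}{2}(j-k)$ slack, and observe that the resulting bounds still absorb into the $\epsilon'=\epsilon/35$ calibration. Once this is checked, the approximation, space, and time claims combine into the stated theorem.
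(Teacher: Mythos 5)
Your proposal is correct and follows essentially the same route as the paper: it combines the final approximation lemma with $\epsilon'=\epsilon/35$, sums the space of the quantile structure $\mathcal{Q}$ and the out-of-order counting structure $\mathcal{B}$ with the same $\epsilon$ versus $1/w$ case split, and obtains the update time via the Ergun--Jowhari observation that one $\tfrac{\epsilon'}{2}$-approximate quantile covers a geometric range of interval lengths. Your extra remark about re-verifying Lemma~\ref{lem:R'} under the $\tfrac{\epsilon'}{2}$ re-parameterisation is exactly the point the paper handles implicitly, so nothing is missing.
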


\runtitle{Remark.} For the whole stream model, the state-of-the-art result is a
$(2+\epsilon)$-approximation in \cite{ErJ08}.  They gave an improved estimator
$R(i)$ as the set
of indices $j$ such that there exists $k < j$ with $|[k,j-1]\cap inv(j)| >
|[k,j-1]\cap R(i)|$. In other words, whether an index belongs to $R(i)$ or not
depends on the number of members of $R(i)$ before that index.  Note that
a member of $R(i)$ could become a nonmember due to window expiration.  Therefore, an
index $j$ that is not a member of $R(i)$ initially, may later become a member
if some of the previous $R(i)$ members become nonmembers.  This makes estimating
this improved $R(i)$ difficult in the sliding window model.

\section{Lower bounds for out-of-order streams}\label{sec:extensions}

In this section, we consider an out-of-order stream $\sigma$
consisting of a sequence of items $\sigma(i) = \langle t_i, v_i\rangle$ for $i\in[N]$,
where $t_i$ and $v_i$ are the timestamp and
value of the $i$-th item, respectively.
Recall that the sortedness of the stream is
measured on the derived value sequence by rearranging the items in
non-decreasing order of the timestamps. 
We show that even for the whole data stream model,
any randomized constant-approximate algorithm
for estimating ED or LIS requires $\Omega(N)$ space. 
In fact, a stronger lower bound holds for ED: any randomized algorithm that decides 
whether ED equals 0 uses $\Omega(N)$ space.  Our proofs follow from reductions from
two different communication problems.

\subsection{Estimating ED in an out-of-order stream}

\begin{theorem}\label{thm:lb_ed}
  Consider an out-of-order stream $\sigma$ of size $N$.
  Any randomized algorithm that distinguish between the cases that 
  $\ed(\sigma) = 0$ and that $\ed(\sigma) \ge 1$ must use $\Omega(N)$ bits.
  Therefore, for arbitrary constant $r \ge 1$, any randomized $r$-approximation to
  $\ed(\sigma)$ requires $\Omega(N)$ bits.
\end{theorem}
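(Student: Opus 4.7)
The plan is to reduce from the two-party randomized set disjointness problem \disj: Alice holds $x\in\{0,1\}^n$, Bob holds $y\in\{0,1\}^n$, and they must decide whether $x_j y_j = 0$ for every $j$. The randomized communication complexity of \disj\ with constant error is $\Omega(n)$~\cite{Jay09}. From inputs $(x,y)$ I will construct an out-of-order stream $\sigma$ whose edit distance to monotonicity is $0$ exactly when $x$ and $y$ are disjoint; any algorithm that distinguishes $\ed(\sigma)=0$ from $\ed(\sigma)\ge 1$ then solves \disj. The main obstacle is designing an encoding in which a ``collision'' $x_j = y_j = 1$ manifests as an unavoidable inversion in the timestamp-sorted value sequence, without introducing any spurious inversion when there is no collision.

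The construction I have in mind is as follows. For each $j\in[n]$ with $x_j=1$, Alice inserts the item $\langle 2j,\, 2j\rangle$, and for each $j$ with $y_j=1$, Bob inserts the item $\langle 2j-1,\, 2j+1\rangle$. Alice's timestamps are even and Bob's are odd, so all timestamps are distinct, and in the timestamp-sorted sequence $V(\sigma)$, Bob's coordinate-$j$ item (if present) immediately precedes Alice's coordinate-$j$ item (if present). Two checks then give the correspondence. Within a coordinate $j$ with $x_j=y_j=1$, the consecutive values in $V(\sigma)$ are $2j+1$ followed by $2j$, which is a strict drop and forces $\ed(\sigma)\ge 1$. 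Across coordinates, every value at coordinate $j$ is at most $2j+1$ while every value at coordinate $j+1$ is at least $2(j+1)=2j+2$, so no cross-coordinate pair can invert. Hence $\ed(\sigma)=0$ iff $x$ and $y$ are disjoint.

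To finish the argument, suppose a randomized streaming algorithm using $S$ bits of space decides $\ed(\sigma)=0$ vs $\ed(\sigma)\ge 1$ with constant error. Alice runs the algorithm on her items in arrival order, sends its $S$-bit memory state to Bob, who then feeds in his items and returns the algorithm's answer. This gives an $O(S)$-bit randomized protocol for \disj, so $S=\Omega(n)$. Choosing $x$ and $y$ with Hamming weights $\Theta(n)$ yields $N=\Theta(n)$ items in the stream and hence $S=\Omega(N)$. The approximation consequence is immediate: for any constant $r\ge 1$, an $r$-approximation algorithm must report $0$ when $\ed(\sigma)=0$ and at least $1$ when $\ed(\sigma)\ge 1$, so it separates the two cases and inherits the same $\Omega(N)$ lower bound.
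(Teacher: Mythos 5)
Your proof is correct, but it takes a different route from the paper. The paper reduces from the one-way communication problem \ind: Alice encodes \emph{every} bit $x_j$ as an item $\langle 2j-1,\,3j-2\rangle$ or $\langle 2j-1,\,3j\rangle$, and Bob injects a single probe item $\langle 2i,\,3i-1\rangle$ whose value falls strictly between Alice's two possible values at coordinate $i$, so that $\ed(\sigma)=x_i$; the $\Omega(n)$ one-way lower bound for \ind~\cite{Abl96} then finishes the argument, with the stream length fixed at $N=n+1$. You instead reduce from two-party \disj, inserting items only for $1$-bits and arranging timestamps and values so that a collision $x_j=y_j=1$ creates the adjacent descent $2j+1,2j$ while cross-coordinate pairs can never invert; your gadget checks out ($\max$ value at coordinate $j$ is $2j+1$, $\min$ at coordinate $j+1$ is $2j+2$), and the restriction to inputs of Hamming weight $\Theta(n)$ (under which the $\Omega(n)$ disjointness bound still holds for the standard hard distribution) legitimately ties $n$ to $N$. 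The trade-off: the paper's \ind\ reduction is leaner---no promise on input weights, stream length independent of the input, and a lower bound tailored exactly to one-pass algorithms---whereas your \disj\ reduction, because the disjointness bound holds for unrestricted two-way communication, automatically extends to an $\Omega(N/p)$ bound for $p$-pass streaming algorithms, which the \ind-based argument does not give. One cosmetic point: the canonical references for the two-party randomized \disj\ bound are Kalyanasundaram--Schnitger and Razborov; the paper's citation \cite{Jay09} concerns the multiparty version (which does imply the two-party bound by setting $t=2$).
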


We prove the above lower bound by showing a reduction from the classical
communication problem \ind, which has strong communication lower bound. 

The problem \ind$(x,i)$~is a two-player one-way communication game.  Alice holds a binary
string $x \in \{0,1\}^n$ and Bob holds an index $i \in [n]$.
In this communication game, Alice sends one message to Bob and Bob is
required to output the $i$-th bit of $x$, i.e.  $x_i$, based on 
the message received.  A trivial protocol is for Alice to send all her
input string $x$ to Bob, which has communication complexity of $n$ bits.  It turns
out that this protocol is optimal.  Particularly, Alice must
communicate $\Omega(n)$ bits in any randomized protocol for \ind
~\cite{Abl96}.

\begin{proof}[of Theorem~\ref{thm:lb_ed}]
  Given an out-of-order stream with length $N$, suppose there is a
  randomized algorithm $\mathcal{A}$ that can determine whether its ED equals to
  0 or is at least 1 using $S$ memory bits.  We define a randomized
  protocol $\mathcal{P}$ for \ind$(x,i)$ for $n = N - 1$: Alice constructs
  (hypothetically) an out-of-order stream $\sigma$ with length $n$ by setting
  \begin{equation}
    \sigma(j) = \left\{ \begin{array}{ll}
    \langle 2j-1, 3j-2\rangle, &\textrm{ if } x_j = 0 \\
    \langle 2j-1, 3j\rangle, &\textrm{ if } x_j = 1 
  \end{array}\right.
  \label{eq:ed_map}
  \end{equation}
  Alice then simulates algorithm $\mathcal{A}$ on stream $\sigma$ and sends the
  content of the working memory to Bob.  Bob constructs another stream item
  $\sigma(n+1) = \langle 2i, 3i-1 \rangle$ to continue running algorithm
  $\mathcal{A}$ on it and obtains the output.  If the output says $\ed(\sigma) =
  0$, Bob outputs 0; otherwise, Bob outputs 1.

  It is not hard to see that \ind$(x,i) = x_i = 0$ implies $\ed(\sigma) = 0$ and
  \ind$(x,i) = 1$ implies $\ed(\sigma) = 1$.  Therefore, if algorithm
  $\mathcal{A}$ reports the correct answer with high probability, the protocol
  $\mathcal{P}$ outputs correctly with high probability, and thus is a valid
  randomized protocol for \ind.  In the protocol, the number of bits communicated by Alice is
  at most $S$.  Combining the $\Omega(n) = \Omega(N)$ lower bound, we obtain
  that $S = \Omega(N)$, completing the proof.
\hfill\qed
\end{proof}

\subsection{Estimating LIS in an out-of-order stream}

\begin{theorem}\label{thm:lb_lis}
  Consider an out-of-order stream $\sigma$ with size $N$.  Any randomized
  algorithm that outputs an $r$-approximation on $\lis(\sigma)$ must use
  $\Omega(N/r^2)$ bits.
\end{theorem}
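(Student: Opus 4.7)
The plan is to reduce from a gap version of the set-disjointness problem \disj, matched against the approximation factor $r$. Specifically, I will use the one-way randomized communication problem of distinguishing $|x\cap y|=0$ from $|x\cap y|\ge r$ on $n$-bit inputs, whose one-way communication complexity is $\Omega(n/r)$ (via Razborov-style arguments; the subsampling protocol gives a matching $O(n/r)$ upper bound). Following the template of the proof of Theorem~\ref{thm:lb_ed}, given a randomized $r$-approximation algorithm $\mathcal{A}$ for $\lis$ that uses $S$ bits of memory on out-of-order streams of length $N$, Alice translates her input $x$ into the first half of an out-of-order stream $\sigma$, simulates $\mathcal{A}$ on that prefix, and forwards the $S$-bit memory state to Bob; Bob then extends $\sigma$ using $y$, finishes the simulation, and returns the gap-\disj answer by thresholding $\alis(\sigma)$. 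If the construction ensures a multiplicative-$r$ gap between $\lis(\sigma)$ in the two cases, then $S=\Omega(n/r)$, and choosing $n=\Theta(N/r)$ gives the claimed $\Omega(N/r^2)$ bound.

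For the stream construction, I divide $\sigma$ into $n=N/r$ consecutive \emph{blocks} $B_1,\dots,B_n$ of $r$ items each, using pairwise disjoint timestamp intervals $T_1<\cdots<T_n$ and strictly increasing value intervals $V_1<\cdots<V_n$. Because $V_i<V_{i+1}$, every increasing subsequence of $\sigma$ lies in a single maximal run of consecutive blocks. The design target is that block $i$ be \emph{good} (containing an increasing chain of $r$ items in $T_i\times V_i$) when $x_i\wedge y_i=0$ and \emph{broken} ($\lis\le 2$, cutting the cross-block chain) when $x_i\wedge y_i=1$. In the disjoint case all $n$ blocks are good and the good chains concatenate to give $\lis(\sigma)=nr=N$; when at least $r$ blocks are broken, the longest run of good blocks has length at most $n/r$, so $\lis(\sigma)\le (n/r)\cdot r=N/r$, a factor-$r$ gap that any $r$-approximation must resolve.

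The main obstacle will be engineering the within-block gadget so that Alice's contribution (depending only on $x_i$) and Bob's contribution (depending only on $y_i$) jointly realise the AND of the two bits without any communication. My plan is to let Alice insert $r/2$ items at the odd timestamps of $T_i$ and Bob insert $r/2$ items at the even timestamps, drawing their values from interleaved sub-slots of $V_i$ (Alice from the odd-indexed values, Bob from the even-indexed), with each party ordering its own values increasingly when its bit is $0$ and decreasingly when its bit is $1$. A routine case analysis of the four bit combinations should then yield block LIS equal to $r$ when both bits are $0$, at least about $r/2$ when exactly one bit is $1$, and only $2$ when both bits are $1$; doubling the block size and retuning the approximation factor $r$ by a constant absorbs the intermediate ``one-bit-on'' case into the good-block side, so that every block with $x_i\wedge y_i=0$ is uniformly treated as good. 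Feeding this calibrated construction into the communication reduction and invoking the gap-\disj lower bound completes the proof.
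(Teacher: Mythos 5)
Your reduction has a fatal structural flaw in the stream construction. You place the blocks with both increasing timestamp intervals $T_1<\cdots<T_n$ and increasing value intervals $V_1<\cdots<V_n$, and then assert that ``every increasing subsequence of $\sigma$ lies in a single maximal run of consecutive blocks.'' That assertion is false: an increasing subsequence is free to take no items at all from a broken block and jump directly from the chain inside a good block $B_{i-1}$ to the chain inside a good block $B_{i+1}$, since all values in $V_{i+1}$ exceed all values in $V_{i-1}$ and all timestamps in $T_{i+1}$ exceed those in $T_{i-1}$. A broken block does not ``cut'' anything; it is simply skipped. Consequently, in the case $|x\cap y|\ge r$ you still have at least $n-r$ good blocks whose chains concatenate, so $\lis(\sigma)\ge (n-r)r=N-r^2$, rather than your claimed bound of $N/r$. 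The two cases differ only additively (by roughly $r^2$ items out of $N$), so an $r$-approximation algorithm is under no obligation to distinguish them, and the reduction to gap-\disj{} yields nothing. Reversing the value intervals ($V_1>V_2>\cdots$) does not help either: then the LIS is the maximum block-LIS, which is about $r$ in both the disjoint and the intersecting case because good blocks survive in both. The underlying difficulty is that with two players each block can encode only one AND of two bits, and a single violated AND can only remove one block's worth of items; a two-party gap-intersection instance therefore produces an additive, not multiplicative, perturbation of $\lis$.

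The paper sidesteps exactly this issue by using $t$-party one-way set disjointness with $t=2r$ players and $\ell=\Theta(N)$ columns: player $P_i$ inserts $\langle (j-1)t+i,\ (\ell-j)t+i\rangle$ for each $1$-entry $x_{ij}$, so values strictly \emph{decrease} across columns while timestamps increase, forcing any increasing subsequence to stay within a single column. Then the multiplicative gap ($\lis=1$ if the sets are pairwise disjoint versus $\lis=t$ if there is an all-ones column) is created inside one column by having all $t$ players contribute to it, and the $\Omega(\ell/t)$ total-communication bound for \disj{} spread over $t-1$ messages gives $S=\Omega(\ell/t^2)=\Omega(N/r^2)$. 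If you want to keep a two-party argument you would need a fundamentally different gadget that converts a single intersection into a factor-$r$ collapse (or blow-up) of the LIS; the blockwise AND construction you propose cannot do that.
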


\begin{proof}
We prove the lower bound by considering the $t$-party set disjointness
problem \disj.  The input to this communication game is a binary $t \times \ell$
matrix $\bx \in \{0, 1\}^{t\ell}$, and each player $P_i$ holds one row of
$\bx$, the 1-entries of which indicate a subset $A_i$ of $[\ell]$.  The input
$\bx$ is called \emph{disjoint} if the $t$ subsets are pairwise
disjoint, i.e., each column of $\bx$ contains at most one 1-entry; and it is
called \emph{uniquely intersecting} if the subsets $A_i$ share a unique common
element $y$ and the sets $A_i - \{y\}$ are pairwise disjoint, meaning that in
$\bx$, except one column with entries all equal to 1, all the other columns have at most one
1-entry.  The objective of the game is to distinguish between the two types of
inputs.  To obtain the space lower bound, we only need to consider a restricted 
version of \disj~where, according to some probabilistic protocol, the first
$t-1$ players in turn send a message privately to his next neighboring player and the last
player $P_t$ outputs the answer.

An optimal lower bound of $\Omega(\ell/t)$ total communication is known for
\disj~even for general randomized protocols (with constant success probability) 
\cite{Jay09}, and thus the lower
bound also holds for our restrited one-way private communication model.  By
giving a reduction and setting the parameters appropriately, we can obtain the
space lower bound.

Given a randomized algorithm that outputs $r$-approximation to the LIS of any out-of-order stream
with length $N$, using $S$ memory bits, we
define a simple randomized protocol for \disj~for $t = 2r = o(N)$ and $\ell =
N+1-t = \Theta(N)$.  Let $\bx$ be the input $t\times \ell$ matrix.  The first player
$P_1$ creates an out-of-order stream $\sigma$ by going through his row of input
$R_1(\bx)$ and inserting a new item $\langle (j-1)t+1, (\ell-j)t+1 \rangle$ to
the end of the stream
whenever an entry $x_{1j}$ equals to 1.  He then runs the streaming algorithm on
$\sigma$ and sends the content of the memory to the second player.  In general,
player $P_i$ appends a new item $\langle (j-1)t+i, (\ell-j)t+i \rangle$ to the
stream for each nonzero entry $x_{ij}$, simulates the streaming algorithm and
communicates the updated memory state to the next player.  Finally, player $P_t$ obtains
the approximated LIS of stream $\sigma$.  If it is at most $r$ he reports
that the input $\bx$ is disjoint; else, he reports it is uniquely intersecting.
It's easy to verify that if the input $\bx$ is disjoint, the correct LIS of stream $\sigma$ is 1,
while if it is uniquely intersecting, the correct LIS of $\sigma$ is $t$.
Consequently, if the streaming algorithm outputs an $r$-approximation to
$\lis(\sigma)$ with probability at least $2/3$, the protocol for \disj~is
correct with constant probability, using total communication at most $(t-1)S$.
Following the lower bound for \disj, this implies $(t-1)S \ge \Omega(\ell/t)$,
i.e., $S = \Omega(\ell/t^2) = \Omega(N/r^2)$.  Theorem~\ref{thm:lb_lis} follows.
\end{proof}

\runtitle{Remark.} Actually, for deterministic algorithms, we can obtain a
slightly stronger lower bound of $\Omega(N/r)$ for $r$-approximation, by
a reduction from the \hlis~problem used in \cite{GaG07} to prove the
$\Omega(\sqrt{N})$ lower bound for approximating LIS of an in-order stream.  The
reduction is similar to the above, and if we set the approximation ratio
$r$ to a constant, the lower bounds become linear in both cases.  Therefore, we
neglect the details here.

\section*{Acknowledgement}
We thank the anonymous reviewers for helpful comments and for pointing out the
randomized lower bounds to us.

{\small

}

\end{document}